\documentclass[11pt]{amsart}

\usepackage[colorlinks=true, pdfstartview=FitV, linkcolor=blue,
citecolor=blue]{hyperref}

\usepackage{amssymb,amsmath}
\usepackage{a4wide}
\usepackage{xcolor}
\vfuzz=2pt

\newtheorem{theorem}{Theorem}
\newtheorem{coro}[theorem]{Corollary}

\newcommand{\ts}{\hspace{0.5pt}}

\newcommand{\RR}{\mathbb{R}\ts}
\newcommand{\NN}{\mathbb{N}}
\newcommand{\Sph}{\mathbb{S}}

\newcommand{\vG}{\varGamma}

\newcommand{\dd}{\,\mathrm{d}}
\newcommand{\ee}{\ts\mathrm{e}}
\newcommand{\ii}{\ts\mathrm{i}\ts}

\newcommand{\myfrac}[2]{\frac{\raisebox{-2pt}{$#1$}}
  {\raisebox{0.5pt}{$#2$}}}

\newcommand{\eqdef}{= \raisebox{0.2pt}{$\scriptstyle{:}$}}

\makeatletter
\renewcommand{\@captionfont}{\small}
\makeatother

\DeclareMathOperator{\vol}{vol}

\begin{document}
	
\title[A note on measures whose diffraction is concentrated on a
single sphere]{A note on measures whose diffraction\\[2mm] is
  concentrated on a single sphere}
	
\author{Michael Baake}
\address{Fakult\"at f\"ur Mathematik, Universit\"at Bielefeld,
	\newline \indent Postfach 100131, 33501 Bielefeld, Germany}
\email{$\{$mbaake,jmazac$\}$@math.uni-bielefeld.de}

\author{Emily R.~Korfanty}
\address{Department of Mathematical and Statistical Sciences,
  University of Alberta, \newline \indent
  Edmonton, AB, T6G 2G1, Canada}
\email{ekorfant@ualberta.ca}

\author{Jan Maz\'{a}\v{c}}

\subjclass{42B10}	
\keywords{Spherical diffraction, radial symmetry, Bessel functions}

\begin{abstract}
  Is there a translation-bounded measure whose diffraction is
  spherically symmetric and concentrated on a single sphere? This
  note constructively answers this question of Strungaru in the
  affirmative.
\end{abstract}

\maketitle
 
When studying the long-range order of a given structure, which can be
a point set, a function, or a translation-bounded measure, diffraction
analysis provides insight into the nature of the order. Diffraction
originates in physics and is a standard tool for understanding
crystals and, later, quasicrystals. As such, it is a well-established
method of structural analysis.

The mathematical theory of diffraction is related to signal
analysis, with early results (in particular on the autocorrelation
of functions) going back to Wiener \cite[Sec.~3]{Wiener}, which were
later extended by others; see also \cite[Def.~2.5]{Ben}. Further
background and references for Wiener's diagram can be found in
\cite[Ch.~9]{TAO}.  For aperiodic structures, this approach was
extended to translation-bounded measures by Hof \cite{Hof}, who
also defined the diffraction as the Fourier transform of the
autocorrelation, as we will briefly recall for bounded functions.

Consider a complex-valued, bounded function $g$ on $\RR^d$ that is
locally integrable, and define its \emph{natural Patterson function} 
$\eta$ via convolution as
\begin{equation}\label{eq:auto-def}
  \eta (x) \, =  \lim_{R\to\infty} 
  \frac{\bigl(g^{}_R * \widetilde{g^{}_R}\ts\bigr)(x)}
  {\vol (B^{}_R )} \ts ,
\end{equation}
provided the limit exists. Here, $B^{}_R$ denotes the closed ball of
radius $R$ around $0$ and $g^{}_{R}$ is the restriction of $g$ to
$B^{}_R$, where, for any function $h$, one has
$\widetilde{h} (x) \mathrel{\mathop:}= \overline{h(-x)}$. For
simplicity, the natural Patterson function is called the (natural)
\emph{autocorrelation} from now on, because it is a special case of an
autocorrelation measure.  If the autocorrelation exists, it does not
matter whether one takes balls or cubes, or other centred van Hove
sequences. For a spherical setting, using balls in the Euclidean norm
is most convenient.

When the autocorrelation exists, one can employ
\cite[Lem.~1.2]{Martin} together with the volume formula for
$B^{}_R$ to rewrite $\eta$ as
\[
  \eta (x) \, = \lim_{R\to\infty} \frac{\bigl(g^{}_R *
    \widetilde{g}\bigr)(x)}{\vol (B^{}_R )} \, = \lim_{R\to\infty}
  \myfrac{\vG \bigl(\frac{d}{2} + 1 \bigr)}{\pi^{\frac{d}{2}} \ts R^d}
  \int_{B^{}_R} g (y) \, \overline{g (x-y)} \dd y \ts ,
\]
where $\vG$ denotes the Gamma function, with
$\vG(x+1) = x \ts \vG(x)$, and values
$\vG\bigl(\tfrac{1}{2}\bigr)=\sqrt{\pi}$ and $\vG(1)=\vG(2)=1$. This
version of $\eta$ is often convenient for computations, especially in
the context of spherical symmetry.  By construction, the
autocorrelation $\eta$ of $g$ is a positive definite function, and is
thus Fourier transformable (in the sense of tempered
distributions). Its Fourier transform is a positive measure (by the
Bochner--Schwartz theorem), which is known as the natural
\emph{diffraction measure} of $g$; we refer to \cite[Ch.~9]{TAO} and
references therein for background. For the Fourier
transform of integrable functions on $\RR^d$, we use
\[
    \widehat{f} (x) \, = \int_{\RR^d} \ee^{-2 \pi \ii x y} f(y) \dd y \ts ,
\]
which is extended to finite and then translation-bounded measures in
the standard way. Here, it suffices to think in terms of tempered
distributions, which avoids some subtleties of the Fourier analysis of
unbounded measures; compare \cite[Ch.~8]{TAO} for more.

In the theory of aperiodic order, diffraction analysis is a powerful
tool for understanding the long-range order of aperiodic tilings
\cite{TAO,Moody}.  In the context of tilings with statistical circular
symmetry (such as the pinwheel tiling, whose diffraction is still an
open problem \cite{BFG}), the following question was asked by
Strungaru.

\vspace{0.7mm}

\noindent
\textit{Is there a planar structure (say, a bounded function, or a
  translation-bounded measure) whose diffraction pattern is uniformly
  distributed and concentrated on a single circle?}

\vspace{0.7mm}

Below, we show that the natural suspect, the spherical wave
$g(x) = \ee^{2\pi \ii r \|x\|}$, with fixed $r>0$, provides an
affirmative answer in the plane and, in fact, in all dimensions.
Here, the radius $r$ takes the role of the wave number (or
vector) $k$ in plane waves of the form $\ee^{2 \pi \ii k x}$.  Note
that our notion of a spherical wave should be distinguished from the
fundamental solution of the radial wave equation from
electrodynamics or optics.  Let us briefly note that a related
analysis of circular cosine functions and their Fourier spectrum was
presented in \cite{Amidror}; however, the setting and objectives
differ from those considered here.

In what follows, we establish that the autocorrelation of
such a function in $d$ dimensions exists and is given by
\begin{equation}\label{eq:autocor}
  \vG \Big( \myfrac{d}{2}\Big)\,
  \frac{J^{}_{\frac{d}{2}-1}\bigl( 2 \pi r \| x \|\bigr)}
  {\bigl( \pi r \| x \| \bigr)^{\frac{d}{2} - 1}} \ts , 
\end{equation}
where $J^{}_{\nu}$ refers to the standard Bessel function of integral
or half-integral order; we refer to \cite[Chs.~9 and 10]{AS} for
details. Each $J^{}_{\nu}$ is an entire function, with series
expansion
\[
  J_{\nu} (z) \, = \, \Bigl( \myfrac{z}{2}\Bigr)^{\nu}
  \sum_{m=0}^{\infty} \myfrac{(-1)^{m}}{m \ts ! \, \vG(\nu+m+1)}
  \, \Bigl(\myfrac{z}{2}\Bigr)^{2m}.
\]

By standard integration in spherical coordinates, if follows that the
Fourier transform of~$\mu^{}_{r}$, the uniform probability measure on
the sphere of radius $r$ in $\RR^d$, satisfies
\begin{equation}\label{eq:sphere-Bessel}
    \widehat{\mu^{}_{r}} (x) \, = \int_{\RR^d} \! \ee^{-2 \pi \ii x y}
    \dd \mu^{}_{r} (y) \, = \, \vG \Big( \myfrac{d}{2}\Big)\,
    \frac{J^{}_{\frac{d}{2}-1}\bigl( 2 \pi r \| x \| \bigr)}
    {\bigl( \pi r \| x \| \bigr)^{\frac{d}{2} - 1}} \ts .
\end{equation}
This also holds in the measure-theoretic sense by considering
functions as Radon--Nikodym densities relative to Lebesgue measure;
compare \cite[Rem.~30]{SpSt}.  Consequently, if we show that the
autocorrelation of the spherical wave is of the form
\eqref{eq:autocor}, the Fourier inversion formula gives the desired
claim.  In what follows, we provide a short proof of this claim, which
highlights all important steps.

\begin{theorem}\label{thm:sphere-d}
  For any fixed\/ $r>0$, the natural autocorrelation\/
  $\eta^{}_{r}$ of the spherical wave\/ $\ee^{2 \pi \ii r \| x \|}$ 
  with\/ $x\in\RR^d$ exists, and is given by
\[
    \eta^{}_{r} (x) \, =  \,   \vG \Big( \myfrac{d}{2}\Big)\,
    \frac{J^{}_{\frac{d}{2}-1}\bigl( 2 \pi r \| x \| \bigr)}
    {\bigl( \pi r \| x \| \bigr)^{\frac{d}{2} - 1}} \ts .
\]
For\/ $r\to 0^{+}$, this converges to the constant function\/ $1$,
so\/ $\eta^{}_{0} (x) \equiv 1$ and\/ $\widehat{\ts\eta^{}_{0}\ts} 
= \delta^{}_{0}$.
\end{theorem}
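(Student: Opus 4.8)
The plan is to evaluate the defining limit \eqref{eq:auto-def} directly and to recognise the outcome as $\widehat{\mu^{}_r}$, so that \eqref{eq:sphere-Bessel} delivers the stated Bessel expression. With $g(y) = \ee^{2\pi\ii r\|y\|}$, the relevant integrand is $g(y)\,\overline{g(x-y)} = \ee^{2\pi\ii r(\|y\| - \|x-y\|)}$, and I would pass to spherical coordinates $y = s\omega$ with $s\ge 0$ and $\omega\in\Sph^{d-1}$, so that $\dd y = s^{d-1}\dd s\,\dd\sigma(\omega)$ and
\[
  \frac{1}{\vol(B^{}_R)}\int_{B^{}_R} g(y)\,\overline{g(x-y)}\dd y
  \, = \, \frac{1}{\vol(B^{}_R)}\int_0^R s^{d-1}\ts c^{}_x(s)\dd s ,
\]
where $c^{}_x(s) \mathrel{\mathop:}= \int_{\Sph^{d-1}} \ee^{2\pi\ii r(s - \|x - s\omega\|)}\dd\sigma(\omega)$ and $\sigma$ denotes the surface measure on the unit sphere.

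The core of the argument is the asymptotics of the phase. For fixed $x$ and $s\to\infty$, expanding $\|x - s\omega\| = s\sqrt{1 - 2(x\cdot\omega)/s + \|x\|^2/s^2}$ gives
\[
  s - \|x - s\omega\| \, = \, x\cdot\omega
  - \myfrac{\|x\|^2 - (x\cdot\omega)^2}{2s} + O(s^{-2}) ,
\]
with the correction uniform in $\omega\in\Sph^{d-1}$ since $|x\cdot\omega|\le\|x\|$. Hence the angular integrand converges uniformly to $\ee^{2\pi\ii r\ts x\cdot\omega}$, and because $\Sph^{d-1}$ carries finite surface measure,
\[
  \lim_{s\to\infty} c^{}_x(s) \, = \int_{\Sph^{d-1}}
  \ee^{2\pi\ii r\ts x\cdot\omega}\dd\sigma(\omega)
  \, = \, \sigma(\Sph^{d-1})\ts\widehat{\mu^{}_r}(x) ,
\]
where the last equality uses the $\omega\mapsto-\omega$ symmetry of $\dd\sigma$ together with the description of $\mu^{}_r$ as the normalised surface measure on the sphere of radius $r$.

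It then remains to carry out the radial average. Since $\vol(B^{}_R) = \sigma(\Sph^{d-1})\ts R^d/d = \int_0^R s^{d-1}\sigma(\Sph^{d-1})\dd s$, the displayed quotient is a weighted Cesàro mean of $c^{}_x(s)/\sigma(\Sph^{d-1})$. Because $c^{}_x$ is bounded by $\sigma(\Sph^{d-1})$ and converges as $s\to\infty$, splitting the $s$-integral at a large threshold shows that the contribution from small $s$---the regime in which the expansion breaks down and the integrand may encounter $x = s\omega$---is $O(1)$ and hence negligible after division by $\vol(B^{}_R)=O(R^d)$, while the remaining part forces the mean to its limit $\lim_{s\to\infty}c^{}_x(s)/\sigma(\Sph^{d-1}) = \widehat{\mu^{}_r}(x)$. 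This yields both the existence of $\eta$ and, through \eqref{eq:sphere-Bessel}, the claimed formula. I expect the main difficulty to lie one step earlier, in securing the uniform phase estimate that guarantees $c^{}_x(s)$ converges at all; granted that, the Cesàro averaging here is a routine weighted-average argument, with the small-$s$ regime controlled purely by boundedness.

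For the limit $r\to 0^{+}$, I would substitute the series for $J^{}_{d/2-1}$ into \eqref{eq:autocor}, giving
\[
  \eta(x) \, = \, \vG\Big(\myfrac{d}{2}\Big)
  \sum_{m=0}^{\infty} \myfrac{(-1)^m}{m\ts!\,\vG\bigl(\frac{d}{2}+m\bigr)}
  \,(\pi r\|x\|)^{2m} ,
\]
in which every term with $m\ge 1$ vanishes as $r\to 0^{+}$, leaving $\vG(d/2)/\vG(d/2)=1$. Thus $\eta\equiv 1$, consistent with the trivial autocorrelation of the constant wave obtained at $r=0$, and since the Fourier transform of the constant function $1$ is $\delta^{}_0$, we conclude $\widehat{\eta}=\delta^{}_0$.
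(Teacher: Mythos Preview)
Your argument is correct and follows a genuinely different route from the paper. The paper fixes $x$ with $s=\|x\|$, writes the pre-limit integral as $h^{}_R(s)$, and computes the Taylor coefficients $h^{(m)}(0)$ by differentiating under the integral sign and then letting $R\to\infty$; the resulting series is then recognised term by term as the Bessel expansion, using Legendre's duplication formula along the way. You instead analyse the phase $s-\|x-s\omega\|$ asymptotically in the radial variable $s$ of the integration, show that the angular integral $c^{}_x(s)$ converges to the spherical Fourier integral $\sigma(\Sph^{d-1})\,\widehat{\mu^{}_r}(x)$, and finish with a weighted Ces\`aro argument together with the ready-made identity \eqref{eq:sphere-Bessel}.

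What each buys: your approach isolates the conceptual content --- that $\eta=\widehat{\mu^{}_r}$ --- and defers the explicit Bessel evaluation to \eqref{eq:sphere-Bessel}; it also avoids the somewhat delicate interchange of Taylor expansion, limit in $R$, and integration that the paper handles by citing the Talvila criteria. The paper's route is more self-contained, re-deriving the Bessel series from scratch rather than invoking \eqref{eq:sphere-Bessel}, at the cost of that interchange and a longer calculation. Your treatment of the $r\to 0^{+}$ limit is essentially the same as what one would do in either framework.
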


\begin{proof} 
Since $\eta^{}_{r} (x) = \eta^{}_{1} (r x)$ for
any $r>0$ and $x\in\RR^d$, it suffices to consider
$\eta = \eta^{}_{1}$. Thus, we are interested in the
autocorrelation defined by
\[
    \eta(x) \, = \lim_{R\to \infty} \myfrac{1}{\vol(B^{}_{R})}
    \int^{}_{B^{}_{R}} \! \ee^{2\pi\ii \|y\|} \ts \ee^{-2\pi\ii \|x-y\|}
     \dd y \ts .
\]
We rewrite the integral in spherical coordinates and choose
$x = (s,\, 0\ts, \, \dots \, 0)$, as the resulting function $\eta$ is
radially symmetric (due to rotation invariance of $B^{}_{R} (0)$
and Euclidean volume). This gives a representation via a double
integral as 
\[
  \eta(x) \, = \lim_{R\to \infty}
  \myfrac{\Theta^{}_d}{\vol(B^{}_{R})}\, \int_0^R \!
  \int^{\pi}_0 \! \rho^{d-1}
  \sin(\theta)^{d-2} \ee^{2\pi\ii \rho} \ee^{-2\ts\pi\ii
    \sqrt{\rho^2+s^2-2 \rho s \cos(\theta)} }  
   \dd \theta  \dd \rho \ts ,
\]
where,  for any fixed $R$, the order of the integrals does not
matter due to Fubini's theorem; compare 
\cite[Thms.~{\!}VI.8.4 and VI.8.7]{Lang}. Let us also note that the
function of $s$ defined by the double integral is \emph{symmetric}
in $s$ because replacing $s$ by $-s$ can be undone by the
substitution $\theta \mapsto \pi - \theta$. Now, a standard
computation gives
\[
  \Theta^{}_d \, = \int^{\pi}_0 \sin(\theta^{}_2)^{d-3}
  \dd\ts\theta^{}_2 \, \cdots \int^{\pi}_0 \!
  \sin(\theta^{}_{d-2}) \dd\ts\theta^{}_{d-2}
  \int^{2\pi}_0 \!\!\! \dd\ts\theta^{}_{d-1} \, = \,
  \myfrac{2 \ts \pi^{\frac{d-1}{2}}}{\vG\left(\frac{d-1}{2}\right)}\ts ,
\]
which means that we are dealing with 
\begin{align*}
  \eta(x) \, & = \, h(s) \, =  \lim_{R\to \infty} h^{}_{R}(s) \\[2mm]
  \, &  = \,  \myfrac{2\ts\vG\left(\frac{d+2}{2}\right)}{\sqrt{\pi}
       \ts \vG\left(\frac{d-1}{2}\right)} \, \lim_{R\to \infty}
       \myfrac{1}{R^d} \int_0^R \! \int^{\pi}_0 \!
       \rho^{d-1} \sin(\theta)^{d-2}
       \ee^{2\pi\ii \rho} \underbrace{\ee^{-2\pi\ii
       \sqrt{\rho^2+s^2-2 \rho s \cos(\theta)}}}_{\eqdef H(s)}
       \dd \theta  \dd \rho  \ts .
\end{align*}
The existence of the limit, for any fixed $s$, is a simple
consequence of the fact that the absolute value of the (continuous
and differentiable)  integrand is bounded by $\rho^{d-1}$, and the
absolute value of the double integral (including the prefactor) then
by $\frac{\pi}{d}$. As this estimate is
independent of $s$, the convergence of
$h^{}_{R} (s)$ as $R \to \infty$ is uniform in $s$, and the limit
function $h$ is continuous. In fact, the functions are also
differentiable to an arbitrary order, which we now want to utilise.

In order to apply Leibniz' rule for differentiation under the
integral, see \cite[Lemma~VIII.2.2]{Lang}, we need to tackle the
singularities of the derivatives of $H$ at $\rho = s$ for
$\theta =0$. To this end, observe that, for any fixed $s$, we can
start the radial integration at some $R_0 > s$ without affecting the
limit, which can be viewed as a simple regularisation. If we
consider $s \in [-M, M]$ for some fixed $M>0$, we may choose
$R_0 = 2 M$ and replace the functions $h^{}_{R}$ by truncated
versions $h^{}_{R,R_0}$, where the radial integration starts at
$R_0$. These functions still converge to $h$, and uniformly so on
$[-M,M]$. In particular, differentiation commutes with the limit.
What is more, we can now apply Leibniz' rule for
derivatives to any of the $h^{}_{R,R_0}$, calculate them via the
corresponding derivatives of $H$ under the integral, and use the
fact that they converge uniformly on $[-M,M]$, namely to the
corresponding derivatives of $h$, where the radial cutoff aviods
all singularities.  

Our strategy now employs the Taylor series of $h$ around  zero,
which can initially be considered as a potentially asymptotic
expansion, though we shall see that its radius of convergence
is positive. The required derivatives of $H$, for any $m\in\NN$
and $\rho \geqslant R_0 > M$, are given by 
\[
  H^{(m)}(s) \, = \, (-2\pi\ii)^m  \ee^{-2\pi\ii
    \sqrt{\rho^2+s^2-2\ts \rho s\cos(\theta)}}
  \left(\myfrac{\rho \ts \cos(\theta) - s}
    {\sqrt{\rho^2+s^2-2 \rho s\cos(\theta)}\ts }\right)^{\! m}
  \! + \, \mathcal{O}^{}_{\theta} (\rho^{-1})  \ts ,
\]
where $\mathcal{O}^{}_{\theta}$ is meant to indicate that the
constant of the error term depends on $\theta$, but is integrable
over the interval $[0,\pi]$ because $\rho$ stays clear of all
singularities.  For $s=0$, we have
\[
  H^{(m)}(0) \,= \, (-2\pi\ii)^m \ts
     \ee^{-2\pi\ii \rho} \cos(\theta)^m
     + \, \mathcal{O}^{}_{\theta} (\rho^{-1}) \ts ,
\] 
where the error term does not lead to any trouble because
it is multiplied with $\rho^{d-1}$ in the radial integral, with
$d\geqslant 2$.  Then, after performing some spherical integration,
it becomes clear that one can replace $R_0$ again by $0$ without
any diﬀerence to the limit as $R \to \infty$. With this
replacement, the remaining integration step reads
\begin{align*}
  h^{(m)}(0) & \, = \, \myfrac{2 \ts (-2 \pi \ii)^{m}
               \vG\left(\frac{d+2}{2}
               \right)}{\sqrt{\pi} \ts \vG\left(\frac{d-1}{2}\right)}
               \lim_{R\to \infty}  \myfrac{1}{R^d}
            \int_0^R  \rho^{d-1} \! \int_{0}^{\pi} \sin(\theta)^{d-2}
               \cos(\theta)^m   \dd \theta \dd \rho 
               + \mathcal{O}(R^{-1})  \\[2mm]
             & \, = \, \myfrac{(-2\pi\ii)^m}{\sqrt{\pi}} \,
               \myfrac{\vG\left(\frac{d}{2}\right)}
               {\vG\left(\frac{d-1}{2}\right)}
               \int^{\pi}_0 \! \sin(\theta)^{d-2} \cos(\theta)^m
               \dd \theta  \\[2mm]
             & \, = \, \begin{cases}
       \myfrac{(-2\pi\ii)^m}{\sqrt{\pi}}\,
       \myfrac{\vG\left(\frac{d}{2}\right)\vG
       \left(\frac{m+1}{2}\right)}{\vG\left(\frac{d+m}{2}\right)} \ts ,
          & \mbox{if}\ m \ \mbox{is even,} \\
            0 \ts , & \mbox{otherwise}. \end{cases}
\end{align*}

Consequently, the Taylor series reads
\begin{align*}
h(s) \, & =  \sum_{m=0}^{\infty} \myfrac{ h^{(m)}(0)}{m!} \, s^m 
       \, = \sum_{m=0}^{\infty} \myfrac{(-1)^m}{(2m)!}\,
       \myfrac{\vG\left(\frac{d}{2}\right)\vG
         \left(\frac{2m+1}{2}\right)}{\sqrt{\pi} \, \vG
         \left(\frac{d}{2}+m\right)}\, (2\pi s)^{2m} \\[2mm]
     & = \, \vG\left(\myfrac{d}{2}\right) \sum_{m=0}^{\infty}
       \myfrac{(-1)^m}{\vG(m+1)\vG\left(\frac{d}{2}+m\right)}
       \, (\pi s)^{2m} \, = \,
       \myfrac{\vG\left(\frac{d}{2}\right)}
         {(\pi s)^{\frac{d-1}{2}}} \, J^{}_{\frac{d}{2}-1} (2\pi s),
\end{align*}
where we used Legendre's duplication formula for the Gamma function
\cite[p.~256]{AS},
\[
  \vG(2z) \, = \, \myfrac{1}{\sqrt{\pi}}
  \bigl(2^{2z-1} \vG(z)\,  \vG(z+\tfrac{1}{2}) \bigr)\ts ,
\]
with $2z = 2m+1$. At this point, we also see that the series
for the symmetric function $h$ actually has infinite radius of
convergence, as expected. 

This completes the claim, as the remaining cases can be obtained by
rescaling the argument, while the situation for $r=0$ is elementary.
\end{proof}

As explained above, we thus also have the following result. 

\begin{coro}\label{coro:wave}
  For any fixed\/ $r>0$, the natural diffraction measure of the
  spherical wave $f(x) = \ee^{2\pi \ii r \|x\|}$ is\/ $\mu^{}_{r}$,
  the probability measure for the uniform distribution on the
  sphere\/ $r \ts \Sph^{d-1} = \partial B_{r}(0)$, with the obvious
  extension to the limiting case\/ $r=0$. \qed
\end{coro}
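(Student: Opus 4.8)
The plan is to obtain the corollary as an immediate consequence of Theorem~\ref{thm:sphere-d} together with the identity~\eqref{eq:sphere-Bessel}, via Fourier inversion in the sense of tempered distributions. First I would note that the autocorrelation produced by the theorem is, term for term, the function on the right-hand side of~\eqref{eq:sphere-Bessel}; in other words,
\[
  \eta \, = \, \widehat{\mu^{}_{r}}
\]
as functions on $\RR^d$, and therefore also as tempered distributions, since $\eta$ is bounded and continuous. The natural diffraction measure is by definition $\widehat{\eta}$, so the task reduces to identifying $\widehat{\widehat{\mu^{}_{r}}}$.

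Next I would apply the distributional Fourier transform a second time. As $\mu^{}_{r}$ is a finite (indeed, probability) measure, it is a tempered distribution, and with the normalisation fixed above the Fourier transform is a bijection on tempered distributions obeying $\widehat{\widehat{T}} = T^{-}$, where $T^{-}$ denotes the pushforward under $x \mapsto -x$. Hence
\[
  \widehat{\eta} \, = \, \widehat{\widehat{\mu^{}_{r}}} \, = \, \mu_{r}^{-} \ts .
\]
Since the sphere $r\ts\Sph^{d-1} = \partial B^{}_{r}(0)$ is centrally symmetric, $\mu^{}_{r}$ is invariant under reflection, so $\mu_{r}^{-} = \mu^{}_{r}$ and the diffraction measure is exactly $\mu^{}_{r}$. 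For the degenerate case, I would invoke weak-$*$ continuity: as $r \to 0^{+}$ the spheres collapse to the origin and $\mu^{}_{r} \to \delta^{}_{0}$, which matches the theorem's limit $\eta \to 1$ together with $\widehat{1} = \delta^{}_{0}$; this is the obvious extension named in the statement.

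The step I expect to require the most care is the double transform itself, but the difficulty is conceptual rather than computational. One should stress that $\eta$ is not integrable --- its Bessel asymptotics give only $\mathcal{O}\bigl(\|x\|^{-(d-1)/2}\bigr)$ decay --- so no $L^1$ inversion formula is available and the distributional framework is essential. It is also worth recording the consistency check with the Bochner--Schwartz theorem: $\eta$ is positive definite by construction, so $\widehat{\eta}$ must be a positive measure, which the non-negative $\mu^{}_{r}$ indeed is; and one must verify that the chosen $2\pi$-convention yields $\widehat{\widehat{T}} = T^{-}$ rather than $T$, so that the central symmetry of the sphere is genuinely needed to conclude.
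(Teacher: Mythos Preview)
Your proposal is correct and follows essentially the same route as the paper, which derives the corollary directly from Theorem~\ref{thm:sphere-d} and the identity~\eqref{eq:sphere-Bessel} via Fourier inversion (the paper simply writes ``the Fourier inversion formula gives the desired claim'' and marks the corollary with a \textsc{qed}). Your version is more explicit in spelling out the double-transform step $\widehat{\widehat{\mu^{}_{r}}}=\mu_{r}^{-}=\mu^{}_{r}$ and in flagging that the distributional setting is needed because $\eta\notin L^{1}$, but these elaborations do not constitute a different approach.
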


Our above argument is explicit and, with hindsight, quite
straightforward. However, it now raises several natural questions
about combinations of such waves and the superposition of spheres;
compare \cite{Amidror} for related problems. Since explicit
calculations quickly get out of hand, the setting asks for a more
systematic approach via singular measures, suitable orthogonality
notions, and a careful analysis of the conditions under which the
autocorrelation is well defined and bounded. One route along
these lines, which also contains an alternative proof of our result
in Corollary~\ref{coro:wave}, has now been taken in \cite{KM}. 

\bigskip

\section*{Acknowledgements}

It is our pleasure to thank Nicolae Strungaru for valuable
discussions. We thank an anonymous referee for his constructive
comments, which helped us to improve the presentation.  This work was
supported by the German Research Council (Deutsche
Forschungsgemeinschaft, DFG) under CRC1283 at Bielefeld University
(project ID 317210226). E.R.K.\ acknowledges travel support from grant
2024-04853 of the Natural Sciences and Engineering Research Council of
Canada (NSERC). \bigskip

\newpage


\begin{thebibliography}{99}

\bibitem{AS}
  M.~Abramowitz and I.A.~Stegun (eds.),
  \textit{Handbook of Mathematical Functions},
  reprint, 9th printing, Dover, New York (1972).

\bibitem{Amidror}
  I.~Amidror,
  Fourier spectrum of radially periodic images,
  \textit{J.\ Opt.\ Soc.\ Amer.\ A} \textbf{14}(4) (1997)
  816--826.

\bibitem{BFG}
  M.~Baake, D.~Frettl\"{o}h and U.~Grimm,
  A radial analogue of Poisson's summation formula with applications 
  to powder diffraction and pinwheel patterns,
  \textit{J.\ Geom.\ Phys.} \textbf{57} (2007) 1331--1343;
  \texttt{arXiv:math/0610408}.

\bibitem{TAO}
  M.~Baake and U.~Grimm,
  \textit{Aperiodic Order. Vol.~1: A Mathematical Invitation},
  Cambridge University Press, Cambridge (2013).
  
\bibitem{Ben}
  J.J.~Benedetto,
  Generalised harmonic analysis and Gabor and wavelet systems,
  in \textit{Proceedings of the Norbert Wiener Centenary
    Congress 1994},
  eds.\ V.~ Mandrekar and P.R.~Masani,
  American Mathematical Society, Providence, RI (1997),
  pp.~85--114.

\bibitem{Hof}
  A.~Hof,  
  On diffraction by aperiodic structures,
  \textit{Commun.\ Math.\ Phys.} \textbf{169} (1995) 25--43. 

\bibitem{KM}
  E.R.~Korfanty and J.~Maz\'a\v{c},
  Diffraction of plane waves, spherical waves, and beyond,
  \textit{preprint} (2025);
  \texttt{arXiv:2511.21159}. 

\bibitem{Lang}
  S.~Lang,
  \textit{Real and Functional Analysis}, 3rd ed.,
  Springer, New York (1993).  

\bibitem{Moody}
  R.V.~Moody,
  Meyer sets and their duals,
  in \textit{The Mathematics of Long-Range Aperiodic Order},
  R.V.~Moody (ed.), NATO ASI Series C 489, Kluwer,
  Dordrecht (1997), pp. 403--441.

\bibitem{Martin}
  M.~Schlottmann,
  Generalised model sets and dynamical systems,
  in \textit{Directions in Mathematical Quasicrystals},
  M.~Baake and R.V.~Moody (eds.), CRM Monograph Series, vol. 13,
  Amer.\ Math.\ Society, Providence, RI (2000), pp. 143--159.

\bibitem{SpSt}
  T.~Spindeler and N.~Strungaru,
  A note on measures vanishing at infinity,
  \textit{Rev.\ Math.\ Phys.} \textbf{31}(2) (2019) 1950007:1--22;
  \texttt{arXiv:1610.03381}. 
  
\bibitem{Wiener}
  N.~Wiener, 
  Generalized harmonic analysis, 
  \textit{Acta~Math.} \textbf{55} (1930) 117--258.  
 
\end{thebibliography}
\end{document}